\title{Left c.e.\ reals $\alpha$ and $C(C(\alpha_n)|\alpha_n)$.}
\author{George Davie}
\address{Department of Decision Sciences, School of Economic and Financial Sciences, University of South Africa, South-Africa}
\email{davieg@unisa.ac.za}
\urladdr{}
\keywords{Kolmogorov complexity; Chaitin's theorem; complexity of complexity, computable; left c.e.\ real.}
\newtheorem{thm}{Theorem}[section]    
\newtheorem{lem}[thm]{Lemma}          
\newtheorem{prop}[thm]{Proposition}    
\newtheorem{cor}[thm]{Corollary}          
\newtheorem*{mlem}{Main Lemma}   
\newtheorem{ex}[thm]{Example}    
\newtheorem*{chait}{Chaitin}
\newtheorem*{Thm1}{Theorem 1} 
\newtheorem*{Cor2}{Corollary 2} 
\newtheorem*{Thm2}{Theorem 2}
\newtheorem*{Thm3}{Theorem 3} 
\theoremstyle{definition}
\newtheorem{defn}[thm]{Definition}    
\newtheorem*{rem}{Remark}             
\begin{document}

\begin{abstract}
We are interested in the computability between left c.e.\ reals $\alpha$ and their initial segments. We show that the quantity $C(C(\alpha_n)|\alpha_n)$ plays a crucial role in this and in their completeness. We look in particular at Chaitin's theorem and its relativisation due to Frank Stephan.
\end{abstract}

\maketitle


\date{June 2022}

\section{Introduction}
Let $\alpha$ and $\beta$ be a left c.e.\ reals and $\alpha_n$, $\beta_n$ their first $n$ digits. It is obvious that $\alpha_n$ computes $\beta_n$ if it's settling time -- the time it takes for the c.e.\ real to settle on $\alpha_n$ -- is longer than that of $\beta_n$. This will be a consequence if the Kolmogorov complexity  $C(\alpha_n)$ is sufficiently greater than  $C(\beta_n)$. 

How hard is it to compute such a $\alpha_n$ from $\beta_n$? Our basic observation is that there exists a smallish $d$\footnote{$d$ computable in $l(P_{\beta},l(P_{\alpha})$}  such that, if $C(\alpha_n) \geq C(\beta_n)+d$ then it is often very hard. 

We will show that this fact implies sharp results around Chaitin's characterisation of computability in terms of initial segment complexity. For example, we will show that if $C(\alpha_n) \geq C(n)+d$ for all $n$, then $\alpha$ is Turing-complete in a very strong sense. We will also look at Frank Stephan's relativisation of Chaitin's result, in this light.

More generally, we will examine the difficulty of computing an initial segment $\alpha_n$ of even slightly higher (or lower)  complexity than that of a given $\beta_n$.

 \section{Background and definitions}

\subsection{Kolmogorov complexity}
\begin{defn}
Let $U$ be a universal and additively optimal binary Turing machine. Let $x$ and $y$ be binary strings. The
Kolmogorov complexity of $x$, given $y$, is the length $l(P)$ of a shortest program $P$ for $U$ that outputs $x$ on input $\langle y,P \rangle $. This is denoted by $C(x|y)$.
The unconditional complexity is then defined as $C(x)=C(x|\epsilon)$ where $\epsilon$ denotes the empty string.
\end{defn}
For $C(x|y)$, imagine $y$ to be given for free, for the program to use as it wishes. For example, if $y=x$ then $C(x|y)$ is a constant: ``write the given $y$''. Similar if $x$ is the reverse of $y$ and so on. 

\smallskip

We will generally follow the notations and conventions of the standard reference for Kolmogorov complexity, Li and Vit\'{a}nyi \cite{lv}. (See in particular Chapters 2 and 3.) The reader is referred here for background and 
undefined concepts and notation. 

\subsubsection{$C(C(x)|x)$}
Most strings of length $n$ have Kolmogorov complexity $C(x)$ close to $n$, and so readily give their own complexity: ``my length minus something small''.  Some strings, however, do \emph{not} want to give their complexity. 
What does that mean, carefully? 
\\It means that there is no short program, which takes $x$, and outputs $C(x)$.

In other words, $C(C(x)|x)$ is large.    

Before we continue, we note that there \emph{are} strings $x$ with large $C(C(x)|x)$. In fact, see Shen and Bauwens \cite{Shen}\footnote{Also see this paper for references to previous work on the function $C(C(x)|x)$.}: For each $n$ there are strings of length $n$ such that
$$C(C(x)|x)\geq \log n-O(1)\footnote{Recall that, since the length $C(C(x)|x))$ of such a program  for $x$ of length $n$ is bounded from above by $\log n$, see Li and Vit\'{a}nyi \cite{lv}, this is the best possible bound.}.$$   For such strings $x$, $x$ itself does not help at all in finding $C(x)$.
\\Strings $x$ with large $C(C(x)|x)$ are rare but play a central role in the theory. One example is in counterexamples to information symmetry of Kolmogorov complexity, see G\'{a}cs, \cite{Gacs}.
\\ Also see, for example, Section 2.8 in Li and Vit\'{a}nyi \cite{lv}. We will sometimes call strings for which $C(C(x)|x)$ is not small, \emph{complexity withholding}.

\subsection{Left c.e.\ reals}
A left c.e.\ real is a limit of a computable sequence of rationals. We can require that each element of the sequence is itself computable without changing the class, see Downey and Hirschfeldt  \cite[Theorem 5.1.5]{Down}. Note that one can consider c.e.\ \emph{sets} $S$ as a subset of this class -- put a 1 in position $n$ iff $n\in S$.
By an abuse of notation we use $P_{\alpha}$ for a program for the enumeration of the sequence of rationals converging to the infinite c.e. real $\alpha$.

\begin{ex}
Dovetail the running of all programs on the empty input, whenever one halts, say program $P$, add measure $2^{-l(P)}$ to the sum. This converges to the best known left c.e.\ real of all, Chaitin's $\Omega$, see \cite{Chait}.
\end{ex}
We denote a generic left c.e.\ real by $\omega$ and the first $n$ bits by $\omega_n$. We will use $\alpha$ and $\beta$ for two particular left c.e.\ reals.

We note the following definitions:
\begin{defn}\label{Tred}
	$\alpha \leq _{T}\beta $ iff $\alpha$ is  Turing-reducible to $\beta$, i.e.\ $\alpha$ is computable from $\beta$.
\end{defn}
When we call a left c.e.\ real \emph{complete}, we will always mean Turing-complete.
\begin{defn}\label{cred}
	$\alpha \leq _{C}\beta $ iff $|C(\alpha_n)- C(\beta_n)|=O(1)$ for all $n$.
\end{defn}
That is, complexities of initial segments are always within a constant. 
\smallskip
\\This definition seems much less complicated than it seems, which will come out implicitly in the paper, for example, it matters a lot whether the inner term is positive or negative. Positive enough implies $\beta_n$ easily computable in $\alpha_n$, negative enough implies $\alpha_n$ easily computable in $\beta_n$\footnote{Via settling times.}. But the inner term may change size and sign. 
 In fact, it needs a remarkable proof (Stephan) just to show that it implies mutual computability $\alpha \leq _{T}\beta $ and $\beta \leq _{T}\alpha $, Theorem \ref{rel chait Steph}.

\begin{defn}\label{rcred}
	$\alpha \leq _{rC}\beta $ iff $C(\alpha_n|\beta_n)=O(1)$ for all $n$.
\end{defn}

That is, there is some constant, such that for each $n$, we can transform\footnote{We will use ``transform $a$ to $b$'' as meaning ``compute $b$ from $a$''.}$\beta_n$ to $\alpha_n$ with a program of length bounded by the constant. It can be shown, see Downey and Hirschfeldt \cite[Theorem 9.6.8]{Down}, that:	

\begin{thm}(Downey, Hirschfeldt and Laforte)\label{DHL}
	$C(\alpha_n|\beta_n)=O(1)$ for all $n$ implies $\alpha \leq _{T}\beta$.
\end{thm}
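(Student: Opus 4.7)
The plan is to use the bound $C(\alpha_n|\beta_n) \leq c$ to produce, with oracle $\beta$, a uniformly bounded list of candidates for $\alpha_n$, and then to identify the correct one using the left c.e.\ enumeration of $\alpha$. With $K = 2^{c+1}$ an upper bound on the number of programs of length at most $c$, for each $n$ the set
\[
S_n = \{U(p,\beta_n): |p| \leq c,\ U(p,\beta_n)\downarrow,\ |U(p,\beta_n)| = n\}
\]
has $|S_n| \leq K$ and contains $\alpha_n$. Given $\beta$ as oracle, I would read off $\beta_n$ and dovetail the computations $U(p, \beta_n)$ to $\beta$-c.e.\ enumerate $S_n$.

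Running the computable enumeration $\alpha^s \nearrow \alpha$ in parallel, I would track $\tilde\alpha_n^s := \max\{y \in S_n^s : y \leq \alpha^s\}$, where $S_n^s$ is the portion of $S_n$ enumerated by stage $s$. The sequence $\tilde\alpha_n^s$ is non-decreasing and converges to $\alpha_n$: any candidate $y \in S_n$ with $y > \alpha_n$ satisfies $y \geq \alpha_n + 2^{-n} > \alpha \geq \alpha^s$ and so fails every test $y \leq \alpha^s$; meanwhile $\alpha_n$ itself is eventually enumerated into $S_n^s$ and eventually strictly surpassed by $\alpha^s$ (the degenerate case $\alpha = \alpha_n$ dyadic makes $\alpha$ computable and is trivial).

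The main obstacle is promoting this $\Delta_2$-limit into a genuine halting $\beta$-computation, since a priori bounded mind-change approximations only place a function at the level of $\beta$-truth-table reducibility to a halting problem, not $\beta$-computability. The resolution leans on the fact that the mind-change bound $K$ is uniform in $n$, together with $\beta$'s ability as an oracle to certify its own settling: the settling time $c_\beta(n)$ of $\beta^s$ is $\beta$-computable, and once we are past $c_\beta(n)$, so that $\beta_n$ is fixed, the only way $\tilde\alpha_n^s$ can still change is through $\alpha^s$ crossing one of the finitely many thresholds in $S_n$. Producing a $\beta$-computable halting certificate for this joint settling is the crux of the Downey--Hirschfeldt--Laforte proof, and the step I expect to be the main technical obstacle.
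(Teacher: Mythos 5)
Your set-up is sound as far as it goes: with oracle $\beta$ the sets $S_n$ are uniformly $\beta$-c.e., have at most $2^{c+1}$ elements, contain $\alpha_n$, and your monotone candidate $\tilde\alpha_n^s$ does converge to $\alpha_n$ (the dyadic degenerate case being trivial). But the argument stops exactly where the theorem begins, and you say so yourself: what you have produced is a $\beta$-computable approximation to $\alpha_n$ with a constant bound on the number of mind changes, and that is strictly weaker than $\alpha \leq_T \beta$. A computable approximation with at most one mind change per bit already describes any c.e.\ set, and noncomputable c.e.\ sets exist; relativizing, ``$\beta$-computable approximation with uniformly boundedly many mind changes'' only yields $\alpha \leq_T \beta'$-type information and cannot, by itself, be promoted to a halting $\beta$-computation. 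Your proposed rescue --- that $\beta$ computes its own settling time $c_\beta(n)$ --- does not close the gap, because once $\beta_n$ is fixed the two remaining sources of uncertainty are not controlled by $\beta$'s settling at all: you never know when the enumeration of $S_n$ is complete (that is the halting problem, relative to $\beta_n$, for the finitely many programs of length at most $c$), and you never know when the computable approximation $\alpha^s$ has passed $0.\alpha_n$ (that is the settling time of $\alpha$, which is exactly what a reduction to $\beta$ is not entitled to). The ``$\beta$-computable halting certificate'' you defer to is therefore not a technical detail to be supplied later; it is the entire content of the theorem.

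For comparison, the paper itself offers no proof of this statement: it is quoted from Downey and Hirschfeldt, Theorem 9.6.8, due to Downey, Hirschfeldt and LaForte. Their argument closes precisely the gap you identify, and it does so with an additional, nonuniform idea that your per-$n$ analysis discards: one exploits the coherence of the candidate sets across lengths (the candidates of interest form a tree of width at most $2^{c+1}$, of which $\alpha$ is a path) and fixes nonuniform parameters --- roughly, the liminf count of relevant candidates and a length beyond which that count is never undercut --- so that the reduction has a certified stopping rule: once it has seen the right number of candidates it knows the extremal one is correct. Some device of this kind must be added to your construction; as written, the proposal establishes a limit approximation but not the reducibility $\alpha \leq_T \beta$.
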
 
An ordering closer in spirit to our theme is that \emph{Settling time reducible}, of Csima and Shore, see \cite{Csima}.
\\As standard references for left c.e.\ reals and complexity, the reader is referred to Nies, \cite{Nies}, or Downey and Hirschfeldt, \cite{Down}.
	
\section{Main idea, Main lemma and theorems}
The single theme running through the paper is the following: 
\\For initial segments $\beta_n$,$\alpha_n$ of left c.e.\ reals, there exist a $d \in N$, computable in $(l(P_{\alpha}),l(P_{\beta}))$, such that, if $C(\alpha_n)-C(\beta_n) \geq d$, then $\alpha_n$ computes $C(\beta_n)$ given $\beta_n$. Almost all the results follow from this.

\subsection{Two processes}\label{dovetail}
Two standard processes play the central role in our results:
	\begin{itemize}
		\item Dovetailing of programs on the empty input: Run program 1 for step 1, then program 2 for step 1 and program 1 for step 2, then \ldots.
		 As the programs $P_i$ halt, list the pair $(P_i,x)$ where $x$ is the output of $P_i$.
		\item For a left c.e.\ real $\alpha$ : Run a program $P_{\alpha}$ which enumerates the binary rationals which are converging to $\alpha$. Note that in general, it will take a large number of steps for the first $n$ digits to settle on the (final) first $n$ digits $\alpha_n.$ The time it takes to settle on the segment $\alpha_n$ on the first $n$ digits is called the \emph{settling time} of $\alpha_n$.
	\end{itemize}

\subsection{Main lemma}
Our main tool is the following elementary lemma.	
	\begin{mlem}\label{mainlem}
		An initial segment $\alpha_n$  gives the complexities of all strings $x$ of complexity up to $C(\alpha_n)-d$, with $d$ computable in $l(P_{\alpha})$.
	\end{mlem}
\begin{proof}All the pairs $(x,P_x^{*})$ appear in the program dovetailing \emph{before} the first $n$ digits settle as $\alpha_n$: 	
If such a pair $(x,P_x^{*})$ appears \emph{after} $\alpha_n$ settles, then the program $P_x^{*}$ (of complexity less than that of $\alpha_n$) gives $\alpha_n$ as follows: 
Run the program dovetailing until $(P_x^{*},x)$ appears, now read off the current first $n$ digits of the c.e.\ sequence process above - this will be $\alpha_n$.  
\end{proof}

To use $P_x^{*}$ to find $C(\alpha_n)$, given $\alpha_n$, we only need any program $P_{\alpha}$ and a program for the dovetailing. It is thus clear that $d$ it is computable in $l(P_{\alpha})$ and the normal dovetailing overheads. 
\\In what follows $x$ itself will often be a $\beta_n$. In this case, $d$ will be computable in $l(P_{\alpha})$ and $l(P_{\beta})$.
\begin{lem}
		An initial segment $\alpha_n$  gives the complexities of all initial segments $\beta_n$ of complexity up to $C(\alpha_n)-d$, with $d$ computable in $l(P_{\alpha})$ and $l(P_{\beta})$.
\end{lem}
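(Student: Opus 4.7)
The plan is to specialise the Main Lemma by letting the arbitrary string $x$ there be the specific initial segment $\beta_n$. The Main Lemma already delivers the key combinatorial statement: the pair $(x, P_x^{*})$ appears in the program dovetailing before the $\alpha$-enumeration settles on $\alpha_n$, whenever $C(x) \leq C(\alpha_n) - d_0$ for some $d_0$ computable in $l(P_\alpha)$ and the usual dovetailing constants. For $x = \beta_n$ the same contradiction argument applies verbatim: if $(\beta_n, P_{\beta_n}^{*})$ were to appear only after $\alpha_n$ had settled, we could describe $\alpha_n$ by ``run the dovetailing and $P_\alpha$ in parallel until $P_{\beta_n}^{*}$ halts, then emit the current $n$-prefix of the $\alpha$-enumeration'', contradicting the optimality of $C(\alpha_n)$.

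The remaining point is the claim that $d$ is computable in $l(P_\alpha)$ \emph{and} $l(P_\beta)$, rather than in $l(P_\alpha)$ alone. I would account for this as follows. In order actually to read off $C(\beta_n)$ from the dovetailing with $\alpha_n$ as the driving input, the procedure must recognise which of the enumerated pairs $(y,P)$ has $y = \beta_n$. The cheapest way to do this with the data at hand is to run $P_\beta$ in parallel and match its settling value against the $y$-components; this adds $l(P_\beta) + O(1)$ bits to the overhead that was already absorbed into $d_0$, giving a new constant $d$ computable in both $l(P_\alpha)$ and $l(P_\beta)$.

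I anticipate no genuine obstacle beyond checking that the extra $l(P_\beta)$-worth of overhead is independent of $n$, which is immediate, since $P_\beta$ is a fixed enumerator for $\beta$ not depending on the particular initial segment under consideration. The hardest part of the write-up will therefore be purely notational: keeping the description-length bookkeeping clean so that the constants contributed by the dovetailing driver, by $P_\alpha$, and by $P_\beta$ are visibly separated from any $n$-dependent quantities. Once that is in place, the lemma follows from the Main Lemma by direct specialisation.
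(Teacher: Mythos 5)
Your proposal is correct and matches the paper exactly: the paper gives no separate argument for this lemma, deriving it by specialising the Main Lemma to $x=\beta_n$ and noting that the constant now absorbs $l(P_\beta)$ alongside $l(P_\alpha)$ and the dovetailing overheads, which is precisely what you do. Your explicit bookkeeping for where the $l(P_\beta)$ term enters is, if anything, more detailed than the paper's one-line remark.
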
	
Note that such an $\alpha_n$ does not merely have longer settling time than $\beta_n$. It's settling time is so much longer, that $\beta_n$ and $P^{*}(\beta_n)$ appear in the program dovetailing before $\alpha_n$ has settled.\footnote{In fact, we could always tweak $\beta_n$ to an $\alpha_n$ by making sure we always add something to $\alpha_n$ after something has added to $\beta_n$. Such an $\alpha_n$ settles after $\beta_n$ but does not give $C(\beta_n)$.}

\subsubsection{Computable}

	We have the famous result of Chaitin:
	\begin{thm}\label{chait orig}
	$\alpha_n$ is  computable  if and only if there is a $d\in N$ for which $$C(\alpha_n)\leq C(n)+d$$ for all $n$.
	\end{thm}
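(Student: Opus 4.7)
The plan is to handle the two directions of the biconditional separately.

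For the easy direction ($\Rightarrow$), if $\alpha$ is computable then a single Turing machine maps $n$ to $\alpha_n$, so $C(\alpha_n \mid n) = O(1)$. Combining with the standard inequality $C(\alpha_n) \leq C(n) + C(\alpha_n \mid n) + O(1)$ yields $C(\alpha_n) \leq C(n) + d$ for some absolute constant $d$.

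The substantial content is the hard direction ($\Leftarrow$). Assume $C(\alpha_n) \leq C(n) + d$ for every $n$. I would aim to establish $C(\alpha_n \mid n) = O(1)$, since Theorem \ref{DHL} then immediately gives $\alpha$ computable. The route is to use the left c.e.\ enumeration $P_\alpha$ producing monotone approximations $\alpha^{(s)}$ increasing to $\alpha$. Given $n$, enumerate the c.e.\ set $S_n = \{\sigma \in 2^n : C(\sigma) \leq C(n) + d\}$; by hypothesis $\alpha_n \in S_n$. The increasing nature of the approximation then forces $\alpha_n$ to be pinpointed by the lexicographically largest element of $S_n$ that ever appears as $\alpha^{(s)}_n$, and a counting/Kraft-style argument is needed to bound the number of mind-changes during the enumeration by a constant independent of $n$.

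The main obstacle is bridging \emph{low complexity} and \emph{computable}. Since $C(n)$ can be as large as $\log n + O(1)$, the set $S_n$ can have polynomial size in $n$, and $S_n$ is only c.e.\ rather than computable. So the bare hypothesis $\alpha_n \in S_n$ does not immediately deliver $C(\alpha_n \mid n) = O(1)$; the delicate step is combining the monotone c.e.\ approximation with the complexity bound to cut the effective number of candidates down to a constant. Alternatively one could try a direct attack via the Main Lemma, but in the regime $C(\alpha_n) \leq C(n) + d$ the lemma is close to the edge of its applicability: it yields useful information only when $C(\alpha_n)$ exceeds $C(\beta_n) + d'$ for the lemma's own constant $d'$, so some care in choosing the $\beta_n$ witnesses is required to make the argument go through.
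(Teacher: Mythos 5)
First, note that the paper does not prove this statement at all: it is imported as Chaitin's 1976 theorem (cited as \cite{Chait}), so there is no in-paper argument to compare yours against. Your forward direction is correct and standard: computability of $\alpha$ gives $C(\alpha_n\mid n)=O(1)$, hence $C(\alpha_n)\leq C(n)+d$.

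The converse, however, contains a genuine gap, and it is exactly the hard core of Chaitin's theorem. Two problems. (i) The set $S_n=\{\sigma\in 2^n : C(\sigma)\leq C(n)+d\}$ is not uniformly c.e.\ in $n$, because $C(n)$ is not computable from $n$ (it is only upper semicomputable); so the procedure ``given $n$, enumerate $S_n$'' is not available as stated. The standard repair is to carry a shortest program for $n$ (or the stagewise upper approximations to $C(n)$) as auxiliary data, which is precisely what prevents the argument from immediately yielding $C(\alpha_n\mid n)=O(1)$ rather than $C(\alpha_n\mid n, C(n))=O(1)$. (ii) The step you flag as ``a counting/Kraft-style argument is needed to bound the number of mind-changes by a constant'' is not a routine estimate to be filled in later: it is Chaitin's counting lemma (for each $d$ there is a bound, depending only on $d$ and not on $n$, on the number of strings $\sigma$ of length $n$ with $C(\sigma)\leq C(n)+d$), whose proof is itself a nontrivial self-referential argument. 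Even granting that lemma, identifying $\alpha_n$ as the lexicographically largest member of $S_n$ attained by the approximation requires knowing when the enumeration has stopped producing new candidates, which is where the remaining work lies (in the left c.e.\ setting one can finish via the mind-change count together with Theorem \ref{DHL}, but that count must first be tied to the constant from the counting lemma). As written, the proposal correctly locates the difficulty but does not resolve it, so the converse direction is not yet a proof.
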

Equivalently
	\begin{chait}\label{chait contra}
		$\alpha_n$ is a non-computable left c.e.\ real if and only if, for each $d\in N$;  $$C(\alpha_n)\geq C(n)+d$$ for some $n$.
	\end{chait}
We can sharpen this in an interesting way
\begin{thm}
	Every noncomputable sequence $\alpha$ computes a partial function $C^{*}(n)$ on inputs $n$, for which infinitely often $C^{*}(n)=C(n)$. Hence, infinitely often, $\alpha$ is correct about $C(n)$. This happens at all the points at which $C(\alpha_n)>C(n)+d$.
\end{thm}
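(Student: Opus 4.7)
The plan is to combine the Main Lemma with Chaitin's theorem in its contrapositive form, exactly as stated in the excerpt.

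Fix any program $P_{\alpha}$ enumerating rationals converging to $\alpha$, and let $d$ be the constant of the Main Lemma (computable in $l(P_{\alpha})$). Define $C^{*}$ using $\alpha$ as oracle as follows. On input $n$, read $\alpha_n$ off $\alpha$, then simultaneously run (i) the program dovetailing of Section \ref{dovetail}, which emits pairs $(x,P)$, and (ii) the enumeration $P_{\alpha}$. Because the enumerated rationals are non-decreasing and bounded above by $\alpha$, the first stage $t_n$ at which their leading $n$ bits equal $\alpha_n$ is also the last such stage, namely the settling time of $\alpha_n$. Output the length of the shortest program emitted by stage $t_n$ whose output is $n$; leave $C^{*}(n)$ undefined if no such program has yet appeared.

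When $C(\alpha_n) > C(n) + d$, the Main Lemma applied with $x = n$ guarantees that a shortest program $P_n^{*}$ for $n$ is emitted in the dovetailing before stage $t_n$, and hence $C^{*}(n) = l(P_n^{*}) = C(n)$. It remains to exhibit infinitely many such $n$. Since $\alpha$ is noncomputable, the Chaitin contrapositive provides, for each $d' \in N$, some $n_{d'}$ with $C(\alpha_{n_{d'}}) \geq C(n_{d'}) + d'$. As $d'$ grows, $C(\alpha_{n_{d'}}) \geq d'$ grows without bound, so the strings $\alpha_{n_{d'}}$, and hence the $n_{d'}$, take infinitely many distinct values. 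Each $n_{d'}$ with $d' > d$ is a point at which $C^{*}(n_{d'}) = C(n_{d'})$, proving the theorem.

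The main subtlety is recognizing that the settling time of $\alpha_n$ is computable from $\alpha$ alone: the monotonicity of the left c.e.\ approximation lets us detect it as the first stage whose leading $n$ bits agree with $\alpha_n$, without any need to wait the dovetailing out. Once this observation is in place, the Main Lemma furnishes the correct value at each qualifying $n$, and the Chaitin contrapositive, applied diagonally over $d'$, supplies infinitely many such $n$.
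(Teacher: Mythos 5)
Your proof is correct and follows essentially the same route as the paper's: apply the contrapositive of Chaitin's theorem to get infinitely many $n$ with $C(\alpha_n)>C(n)+d$, then use the Main Lemma at those $n$ to read off $C(n)$ from the dovetailing before $\alpha_n$ settles. You are somewhat more explicit than the paper about how the oracle machine detects the settling time via monotonicity of the left c.e.\ approximation and about why the qualifying $n$ are infinite in number, but these are elaborations of the same argument, not a different one.
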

\begin{proof}
	For each constant $c$, $C(\alpha_n)>C(n)+c$ infinitely often, including for $c=d$. Using the Main lemma, by the time $\alpha_n$ with $C(\alpha_n)>C(n)+d$ has settled, all programs of  length at most $C(n)$ have halted with their shortest programs. Hence, given any $n$, find the pair $(n,p)$ with shortest $p$, then $|p|=C(n)$.
\end{proof}	
In some sense, all non-computable $\alpha$ are a bit complete.
\\If $C(\alpha_n)>C(n)+d$ for all $n$, then
	\begin{Thm1}
		Given $\alpha$, there is a computable $d\in N$ such that if $C(\alpha_n)>C(n)+d$ for all $n$, then $\alpha_n$  computes $C(0_n)$, hence $C(n)$, given $n$. 		
	\end{Thm1}
Which is a strengthening of the ``if'' part of Chaitin above, in terms of a particular computable $d$.

Hence $\alpha$ computes the complexity function and thus the halting problem. It follows that all such $\alpha$ are not only non-computable but in fact Turing-complete. 

\begin{cor}
Given $\alpha$, there is a computable $d\in N$ such that if $C(\alpha_n)>C(n)+d$ for all $n$, then $\alpha_n$  is Turing complete.
\end{cor}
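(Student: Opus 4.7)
The plan is to combine Theorem 1 with the classical fact that the unconditional complexity function $C$ is Turing complete. By Theorem 1, for the same computable $d$, the hypothesis $C(\alpha_n)>C(n)+d$ for all $n$ yields a procedure that, using $\alpha$ as oracle and given $n$, outputs $C(n)$. Hence $\alpha$ computes the total function $n\mapsto C(n)$.

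Next I would invoke the standard fact that $\emptyset'\leq_T C$: from oracle access to $C$ one can compute a bound on the halting time of any program $e$, since otherwise a program that has not halted by the time all shortest programs up to a sufficiently high complexity $C(n)$ have been enumerated would allow one to describe $n$ below its own complexity. This is classical (see e.g.\ Li and Vit\'anyi \cite{lv}) and I would cite it rather than reprove it.

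Putting the two pieces together, $\emptyset'\leq_T C\leq_T\alpha$. Since $\alpha$ is left c.e., $\alpha\leq_T\emptyset'$ is automatic, so $\alpha\equiv_T\emptyset'$, i.e.\ $\alpha$ is Turing complete, with the same $d$ supplied by Theorem 1. There is really no obstacle: all the content is in the Main Lemma and Theorem 1; the corollary is immediate from two well-known reductions.
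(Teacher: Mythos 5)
Your proposal is correct and matches the paper's own argument: the paper likewise deduces from Theorem 1 that $\alpha$ computes the complexity function $n\mapsto C(n)$ and then cites the standard fact that this function computes the halting problem, whence $\alpha$ is Turing complete. The only (harmless) addition is your explicit remark that $\alpha\leq_T\emptyset'$ because $\alpha$ is left c.e., giving full Turing equivalence.
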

\begin{rem}
	There is thus a tight bound $d$ such that $\alpha$ is computable if $C(\alpha_n)<C(n)+d_1$ for some $d_1<d$. At a bit larger $d>d_1$, $\alpha$ becomes complete. 
\end{rem}

 This leads to: 
\begin{cor}\label{wow}
	Let $\alpha $ be a Turing incomplete c.e.\ real. There is a computable constant $d$ such
	that $C(\alpha_n)<C(n)+d$ infinitely often. 
\end{cor}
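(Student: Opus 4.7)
The plan is to prove this by contraposition, using Theorem~1 (and its corollary) essentially off the shelf. Let $d$ be the computable constant supplied by Theorem~1 for the given $\alpha$ (so $d$ is computable in $l(P_\alpha)$). Suppose for contradiction that $C(\alpha_n) < C(n)+d$ holds only finitely often; then there is some threshold $N$ such that
\[
C(\alpha_n) \geq C(n)+d \quad \text{for every } n \geq N.
\]
The goal is to derive that $\alpha$ is Turing complete, contradicting the hypothesis.

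For each $n \geq N$, the Main Lemma applies exactly as in the proof of Theorem~1: run the program dovetailing in parallel with $P_\alpha$, wait until the first $n$ digits of the c.e.\ enumeration settle as $\alpha_n$, and harvest from the dovetailing table the shortest program seen with output $n$; its length is $C(n)$. This gives a uniform procedure that, with oracle access to $\alpha$, computes $C(n)$ from $n$ for every $n \geq N$. The finitely many exceptions $n < N$ are absorbed into a fixed lookup table, so we obtain a total Turing reduction $\alpha \geq_T C$. Since $C \equiv_T \emptyset'$ by the standard argument (the halting problem computes $C$ by enumeration, and $C$ computes $\emptyset'$), we conclude that $\alpha$ is Turing complete, contradicting the assumption that $\alpha$ is Turing incomplete.

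The only subtle point, and the place where one has to be a little careful, is that Theorem~1 is stated with the hypothesis holding for \emph{all} $n$, whereas here we only get it for cofinitely many $n$. But this is exactly the sort of gap that a finite patch absorbs, and the constant $d$ chosen is unchanged -- it is the same computable constant from Theorem~1, depending only on $l(P_\alpha)$ and the dovetailing overhead. No new machinery is required beyond the Main Lemma and the Turing equivalence $C \equiv_T \emptyset'$.
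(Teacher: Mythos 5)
Your proof is correct and follows the route the paper itself intends: the corollary is the contrapositive of the preceding corollary of Theorem~1, with everything resting on the Main Lemma. Your explicit treatment of the gap between ``for all $n$'' and ``for all but finitely many $n$'' --- which is what upgrades the naive contrapositive's ``for some $n$'' to the claimed ``infinitely often'', via a finite lookup table and the observation that the constant $d$ is unchanged --- is exactly the step the paper leaves implicit, and you handle it correctly.
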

\begin{rem}
The relation with existing results is quite subtle, Corollary \ref{wow} follows from Stephan's \emph{proof}, of Theorem \ref{rel chait Steph} in \cite[Theorem 9.7.1]{Down}. Having an explicit computable cut-off point $d$, for completeness, sharpens this result in one direction. The result that \emph{looks} the most similar \cite[Theorem 9.12.4 (Stephan)]{Down} has much stronger conditions, that $\lim C(\alpha_n)-C(n)\rightarrow \infty$.  
\end{rem}

The author finds it remarkable that all left c.e.\ reals except those which are Turing complete,\footnote{These can attain complexity larger than $n-2 \log n$ \emph{all the time}.} infinitely often have complexity within a constant of that of the $0$ sequence. 
That this is unexpected is emphasized by the fact that all such c.e.\ reals are among the \emph{paracomputable} sequences. See \cite[Exercise 2.5.15]{lv}. 

In fact, using the methods of Stephan's proof of Theorem 9.7.1 in \cite{Down}, we can show that sparse encodings of the halting sequence, can be ``stretched out'' such that they are also paracomputable.

\smallskip

Combining the two results:
\begin{cor}\label{complete1}
	Let $\alpha $ be a non-computable Turing incomplete c.e.\ real. Then 
	\\$C(\alpha_n)-C(n)$ is unbounded but there is also a computable constant $d$ such
	that 
	\\$C(\alpha_n)<C(n)+d$ infinitely often.
\end{cor}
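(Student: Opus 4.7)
The plan is to observe that Corollary \ref{complete1} is essentially a repackaging of two results already available in the paper, with the two hypotheses (non-computable, Turing incomplete) each contributing exactly one clause of the conclusion.

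For the first clause, that $C(\alpha_n) - C(n)$ is unbounded, I would invoke the contrapositive form of Chaitin's theorem stated as the \emph{Chaitin} item following Theorem \ref{chait orig}: since $\alpha$ is non-computable, for every $d \in \mathbb{N}$ there is some $n$ with $C(\alpha_n) \geq C(n) + d$. Taking $d \to \infty$ immediately gives that $\sup_n \bigl(C(\alpha_n) - C(n)\bigr) = \infty$, i.e.\ the difference is unbounded. Note that only non-computability is used here; Turing incompleteness plays no role in this clause.

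For the second clause, that $C(\alpha_n) < C(n) + d$ infinitely often for some computable $d$, I would appeal directly to Corollary \ref{wow}, which was proved precisely under the assumption that $\alpha$ is Turing incomplete and produces a computable constant $d$ with $C(\alpha_n) < C(n) + d$ i.o. Here only Turing incompleteness is used.

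Conjoining the two statements with the same $\alpha$ yields exactly the corollary. I do not anticipate any real obstacle: the substantive work was done in establishing Corollary \ref{wow} (via the Main Lemma and Theorem~1), and the novelty of Corollary \ref{complete1} is simply the juxtaposition — highlighting that for a non-computable yet Turing incomplete left c.e.\ real, the quantity $C(\alpha_n) - C(n)$ is unbounded above but dips back to within a fixed computable constant infinitely often, so the complexity profile oscillates rather than monotonically escaping.
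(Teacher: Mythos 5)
Your proposal is correct and matches the paper's own (implicit) argument exactly: the paper introduces the corollary with ``Combining the two results,'' meaning precisely the contrapositive of Chaitin's theorem for the unboundedness clause and Corollary \ref{wow} for the infinitely-often clause. Nothing further is needed.
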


\subsubsection{Relativising}
Recall Chaitin's theorem:
\begin{thm}\emph{[Chaitin]}
	A sequence $\alpha$ is computable if and only if there exists a $d$ such that $C(\alpha_n) \leq C(0_n)+d$ for all $n$.
\end{thm}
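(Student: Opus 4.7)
The plan is to reduce this statement to Theorem \ref{chait orig}, observing that $C(0_n)$ and $C(n)$ differ by only an additive constant. Indeed, a fixed short program that on input $n$ prints $n$ zeros gives $C(0_n) \leq C(n) + c_1$, while reading off the length of the string $0_n$ recovers $n$, so $C(n) \leq C(0_n) + c_2$. Absorbing these constants shows that the hypothesis ``$C(\alpha_n) \leq C(0_n) + d$ for all $n$'' is equivalent, after adjusting the constant, to ``$C(\alpha_n) \leq C(n) + d'$ for all $n$''.

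For the forward direction, if $\alpha$ is computable then a single machine outputs $\alpha_n$ on input $n$; prefixing with a shortest description of $n$ yields $C(\alpha_n) \leq C(n) + O(1)$, and the equivalence above turns this into $C(\alpha_n) \leq C(0_n) + d$ for a suitable $d$. This direction is essentially immediate and does not use anything specific to c.e.\ reals.

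For the converse, once the hypothesis is rewritten as $C(\alpha_n) \leq C(n) + d'$, the conclusion that $\alpha$ is computable is exactly the content of Theorem \ref{chait orig}. The genuinely hard part is, as always, this backward direction of Chaitin's theorem, but we inherit it here for free: in the c.e.\ setting of the paper it is furnished by the Main Lemma together with its consequences (the contrapositive Chaitin and Theorem 1), since a non-computable $\alpha$ would force $C(\alpha_n) > C(n) + c$ to occur for some $n$, for every constant $c$, contradicting our uniform bound.

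The only substantive point in writing this out is thus the first paragraph — the passage between the two base strings $n$ and $0_n$. The role of the $0_n$ reformulation is to exhibit a canonical reference string whose complexity provides the computability threshold; this is precisely the form that will be relativised in the next subsection, where $0_n$ will be replaced by an initial segment $\beta_n$ of another left c.e.\ real, giving the Stephan relativisation of Theorem \ref{rel chait Steph}.
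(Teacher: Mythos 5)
Your reduction is correct and is exactly what the paper does (implicitly): the paper offers no proof of this statement at all, treating it as a restatement of the recalled Theorem \ref{chait orig} with the cited reference to Chaitin. The translation $C(0_n) = C(n) + O(1)$ (print $n$ zeros given $n$; read off the length given $0_n$) is right, and the forward direction via a machine computing $\alpha_n$ from $n$ is the standard easy half. So as a reduction to the previously stated theorem, the proposal is fine.

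One remark in your third paragraph is wrong and worth deleting, though it does not sink the argument since you also (correctly) just cite Theorem \ref{chait orig}. You claim the hard direction ``is furnished by the Main Lemma together with its consequences (the contrapositive Chaitin and Theorem 1).'' It is not. The Main Lemma and Theorem 1 prove the \emph{complementary} implication: if $C(\alpha_n) > C(n) + d$ for \emph{all} $n$ then $\alpha$ computes the complexity function and is Turing complete. They say nothing about deducing \emph{computability} from the uniform upper bound $C(\alpha_n) \leq C(n) + d$ --- that is the genuinely hard half of Chaitin's theorem, which the paper imports from \cite{Chait} without proof. And the ``contrapositive Chaitin'' displayed in the paper is literally the contrapositive of the statement you are trying to prove, so invoking it there is circular. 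The honest formulation is that the backward direction is inherited wholesale from the cited Theorem \ref{chait orig}, and only the passage between the reference strings $n$ and $0_n$ is being supplied.
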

Stephan generalised the \emph{if} direction to the non-computable case, see \cite[Theorem 9.7.1]{Down} for the remarkable proof. See also Merkle and Stephan, \cite{Merkle Stephan}.

\begin{thm}\emph{[\textbf{Relative Chaitin} - Stephan]}\label{rel chait Steph}
	A left c.e.\ real $\alpha$ is computable in $\beta$ if there exists a $d$ such that $C(\alpha_n) \leq C(\beta_n) +O(1)$ for all $n$.
\end{thm}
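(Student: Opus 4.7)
The plan is to leverage the Main Lemma applied to $\beta$, combined with oracle access to $\beta$.

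Given $\beta$ as an oracle, I first use it to determine the settling time $s^{\beta}_m$ of each $\beta_m$ in the canonical enumeration $q_s \uparrow \beta$ produced by $P_\beta$: we know $\beta_m$ from the oracle, so we wait until $q_s \geq 0.\beta_m$, at which point (since $q_s < \beta < 0.\beta_m + 2^{-m}$ from then on) the first $m$ bits of $q_s$ are locked in as $\beta_m$. Thus $s^{\beta}_m$ is $\beta$-computable in $m$. By the Main Lemma applied to $\beta_m$, by stage $s^{\beta}_m$ the program dovetailing has enumerated $(y, P^{*}_y)$ for every $y$ with $C(y) \leq C(\beta_m) - d_\beta$, where $d_\beta$ depends only on $l(P_\beta)$.

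Next, since $C(\beta_m) \geq C(m) - O(1) \to \infty$, for each $n$ there exists $m \geq n$ with $C(\beta_m) \geq C(\beta_n) + d + d_\beta$; the hypothesis then gives $C(\alpha_n) \leq C(\beta_n) + d \leq C(\beta_m) - d_\beta$, so the pair $(\alpha_n, P^{*}_{\alpha_n})$ has appeared in the dovetailing by stage $s^{\beta}_m$. To extract $\alpha_n$ itself, I would run $\alpha$'s enumeration in parallel with $\beta$'s and the dovetailing, and at each stage $s^{\beta}_m$ read the candidate $c_m := \alpha^{s^{\beta}_m}_n$. Because $s^{\beta}_m \to \infty$ and $\alpha$ has some finite settling time, the sequence $c_m$ is eventually constant with value $\alpha_n$; moreover, once $m$ is large enough, the correct $c_m$ matches the first coordinate of some pair $(y, P)$ with $|P| \leq C(\beta_m) - d_\beta$ appearing in the dovetailing by $s^{\beta}_m$, which serves as a complexity certificate.

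The main obstacle is to turn this convergence into a halting $\beta$-computation, i.e.\ to produce a $\beta$-computable threshold $m^{*}(n)$ after which the candidate is guaranteed to equal $\alpha_n$. This is where the hypothesis $C(\alpha_n) \leq C(\beta_n) + O(1)$ does its real work: following Stephan's argument (see \cite[Theorem 9.7.1]{Down}, cf.\ \cite{Merkle Stephan}), a Kraft-type counting argument bounds how many $\beta$-approximations can support an incorrect candidate, yielding the required $m^{*}(n)$. In this framing the Main Lemma supplies the explicit complexity-threshold bookkeeping on which the counting rests, and packages the reduction $\alpha \leq_T \beta$ as a certified stabilisation of the $c_m$ at stage $s^{\beta}_{m^{*}(n)}$.
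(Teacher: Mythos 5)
The paper itself offers no proof of this statement: it is imported as Stephan's theorem, with the reader referred to \cite[Theorem 9.7.1]{Down} for what the author calls the ``remarkable proof''. Measured against that, your proposal has a genuine gap. Everything you actually establish --- that the candidates $c_m=\alpha^{s^{\beta}_m}_n$ are eventually constant with value $\alpha_n$ --- holds for any convergent left c.e.\ approximation whatsoever and uses neither the hypothesis $C(\alpha_n)\leq C(\beta_n)+O(1)$ nor the Main Lemma in any essential way; a $\beta$-computable sequence of guesses converging to $\alpha$ only yields $\alpha\leq_T\beta'$ by the limit lemma, not $\alpha\leq_T\beta$. The single step that would actually prove the theorem --- producing a $\beta$-computable threshold $m^{*}(n)$ past which the candidate is guaranteed correct --- is precisely the step you defer to ``Stephan's argument'' and an unspecified ``Kraft-type counting argument''. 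That counting argument (bounding the number of times $\alpha_n$ can change after $\beta_n$ settles, on pain of giving $\alpha_n$ a description shorter than $C(\beta_n)+d$ allows) is the entire content of the theorem, so invoking it by citation makes the proposal circular rather than a proof.

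Two further concrete problems. First, your proposed ``complexity certificate'' certifies nothing: stale approximations to $\alpha_n$ can also occur as outputs of short programs enumerated in the dovetailing by stage $s^{\beta}_m$, so matching some pair $(y,P)$ with $l(P)\leq C(\beta_m)-d_{\beta}$ does not distinguish the true $\alpha_n$ from an earlier wrong guess. Second, the index $m$ with $C(\beta_m)\geq C(\beta_n)+d+d_{\beta}$ exists but is not $\beta$-computably locatable, since finding it requires knowing $C(\beta_m)$; so even the existence of a good stage $s^{\beta}_m$ cannot be turned into a halting computation by the bookkeeping you describe. A self-contained proof would have to reproduce the counting/Kraft--Chaitin construction of \cite[Theorem 9.7.1]{Down}; the Main Lemma of this paper does not substitute for it.
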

In terms of the definitions in Section 2: 	$\alpha \leq _{C}\beta \rightarrow 	\alpha \leq _{T}\beta $.

\begin{rem}\label{halting sequence} There is of course no hope of relativising the other direction:  Take, for example, $\alpha$ as the set of halting programs as a c.e.\ sequence -- put a $1$ in position $n$ if the $n$-th program halts on input $n$ -- with $C(\alpha_n) \leq 2\log n+O(1)$. Also take $\Omega$ with $C(\Omega_n) \geq n- 2\log n +O(1)$. Then despite the vast complexity difference, each computes the other -- and both are complete. 
\end{rem}	

Relativising Theorem 1 for a left c.e.\ real $\beta$ instead of the $0$ sequence gives a sharp relativised form of the \emph{if} part of original form of Chaitin\footnote{This does not sharpen Theorem \ref{rel chait Steph} however, since Stephan does not assume that $C(\alpha_n) \geq C(\beta_n)$.}.  
\begin{Cor2}
Given $\alpha$,$\beta$, there is a computable $d\in N$ such that 
\\if $C(\alpha_n)>C(\beta_n)+d$ for all $n$, then $\alpha_n$  computes $C(\beta_n)$ on input $\beta_n$, for all $n$.
\end{Cor2}	
Note that ``$\alpha_n$ computes $C(\beta_n)$ on input $\beta_n$'' relativises ``$\alpha_n$ computes $C(0_n)$ on input $0_n$'' in Chaitin's theorem, which in the latter computable case implies Turing complete. 
\\Of course in the relativised case, we get Turing completeness too, but the result is stronger.
Note that, even if $\beta$ is itself complete, $\alpha_n$ computing $C(\beta_n)$ for each $n$ is quite strong.

\subsection{$\beta_n$ cannot lift or drop with $C_{\beta,n}$ resources, even if large }
\subsubsection{Lifting}
	\begin{Thm2}
		For a given  $\beta$,$\alpha$, there is a computable number $d$, such that, for all $n$, no program of length at most $C_{\beta,n}$ can transform $\beta_n$ into  $\alpha_n$ with $C(\alpha_n) \geq C(\beta_n)+d.$
	\end{Thm2} 

\begin{proof}
	We know that if we can compute such an $\alpha_n$ from $\beta_n$, then $\alpha_n$ can be used to form a list containing $\beta_n$ paired with a shortest program for $\beta_n $. Hence $C(\beta_n)$ can be found given $\beta_n$. We thus must have that $$C(\alpha_n|\beta_n) \geq C_{\beta,n}+d$$
\end{proof}
Hence, even when $C_{\beta,n}$ is very large, there is no program of length $C_{\beta,n} \gg d$ or less which can transform $\beta_n$ into an $\alpha_n$ of complexity even $d$ more.
\begin{rem}
	Hence we cannot try to modify $\beta_n$ and hope to reach some $\alpha_{n}$ which is $d$ above. For example, looking at how far other c.e.\ initial segments have developed and then modifying them with any programs of length at most $C_{\beta,n}$, cannot produce an actual c.e.\ segment which is more than $d$ above.
\end{rem}
\begin{cor}
	For a given  $\beta$,$\alpha$, there is a computable number $d$ such that, if 
	\begin{equation}\label{eq1}
	C(\alpha_n|\beta_n) \leq C_{\beta,n}+d
	\end{equation} for all $n$, then $\alpha$ is computable in $\beta$.
\end{cor}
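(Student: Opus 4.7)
The plan is to derive from the hypothesis a uniform bound of the form $C(\alpha_n) \leq C(\beta_n) + O(1)$ and then invoke Stephan's relativisation (Theorem \ref{rel chait Steph}) to conclude $\alpha \leq_T \beta$. The bridge between the two is simply the contrapositive of Theorem 2, so the work is in making sure the constants line up.

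First I would recall what Theorem 2 actually delivers. It produces a computable $d_{0}$ (computable in $l(P_\alpha),l(P_\beta)$) such that for every $n$,
\begin{equation*}
C(\alpha_n) \geq C(\beta_n) + d_{0} \;\Longrightarrow\; C(\alpha_n|\beta_n) \geq C_{\beta,n} + d_{0}.
\end{equation*}
The contrapositive says that if $C(\alpha_n|\beta_n) < C_{\beta,n} + d_{0}$, then $C(\alpha_n) < C(\beta_n) + d_{0}$. Setting the constant $d$ of the corollary equal to $d_{0}-1$ (or, equivalently, absorbing the strict vs.\ non-strict inequality into an additive constant from the Main Lemma's dovetailing overheads), the standing hypothesis $C(\alpha_n|\beta_n) \leq C_{\beta,n} + d$ for all $n$ forces $C(\alpha_n) \leq C(\beta_n) + d_{0}$ for all $n$.

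This is precisely the hypothesis of Theorem \ref{rel chait Steph}, with the $O(1)$ there concretely realised by the computable constant $d_{0}$. Stephan's theorem then gives $\alpha \leq_T \beta$, which is the conclusion of the corollary. Since $d_{0}$ is computable in $l(P_\alpha)$ and $l(P_\beta)$, so is the final $d$ of the corollary, matching the claim that it is a computable number.

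The main obstacle, and really the only delicate point, is bookkeeping with the constants: the Main Lemma's overhead, Theorem 2's $d_0$, and the $O(1)$ slack hiding in the reduction $\alpha_n \to C(\beta_n)$ all contribute, and one has to verify that a single $d$ suffices uniformly in $n$. Because every one of these contributions is independent of $n$ and computable from $l(P_\alpha),l(P_\beta)$, collapsing them into one constant is straightforward; the real content of the corollary is just the contrapositive observation, with the heavy lifting done already by Theorem 2 and by Stephan's relative Chaitin.
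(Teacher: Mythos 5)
Your argument is the same as the paper's: you take the contrapositive of Theorem 2 to turn the hypothesis $C(\alpha_n|\beta_n)\leq C_{\beta,n}+d$ into the uniform bound $C(\alpha_n)\leq C(\beta_n)+O(1)$, and then apply Stephan's relative Chaitin theorem (Theorem \ref{rel chait Steph}) to conclude $\alpha\leq_T\beta$. The extra care you take over the strict/non-strict inequality and the computability of the constant is sound and only makes explicit what the paper leaves implicit.
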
 
\begin{proof}
	We know that no program of length at most $C_{\beta,n}$ can lift $C(\beta_n)$ to a c.e.\ real even $d$ higher, hence equation (\ref{eq1}) actually means that $\alpha_{n}$ is within $d$ of $\beta_n$. By relativised Chaitin, \ref{rel chait Steph} we get that $\alpha$ is computable in $\beta$.  
\end{proof}
This weakens the $O(1)$ condition in Theorem \ref{DHL} to the expression $C_{\beta,n}$.
\subsubsection{Dropping}
Note that our interest is in $C_{\beta,n}$. Hence, we imagine we are given $\beta_n$ and are trying to find $C(\beta_n)$. In the theorem above we tried to \emph{lift} to an initial segment $\alpha_n$ for which the settling time was so long that $\beta_n$ and a shortest program for it, appeared. If, on the other hand, we could \emph{drop} to a close complexity $\alpha_n$ then we, as $\beta_n$, could play the settling time roll and wait for $\alpha_n$ and a shortest program for $\alpha_n$ to appear. Since we are assuming that $\alpha_n$ and $\beta_n$ have very close complexity, we could just add a small number to $C(\alpha_n)$ to find the sought after $C(\beta_n)$. The following theorem says that this is also hard. Note that in the the theorem above, we did not need to know exactly how much higher $\alpha_n$ was than $\beta_n$. For the case below, we must know how much lower:     
\begin{Thm3}
	For a given $\beta$,$\alpha$, there is a computable number $d$ such that, for all $n$, no program of length at most $C_{\beta,n}-2 \log c-d$ can transform $\beta_n$ into  $\alpha_n$ with $C(\alpha_n) = C(\beta_n)-c-d.$
	\end{Thm3}
\begin{proof}
	Use the Main lemma again. Firstly, find $\alpha_n$ from $\beta_n$. Now wait for $\beta_n$ to settle and look for $C(\alpha_n)$. To get $C(\beta_n)$, add $c$. 	
	Hence $$C_{\beta,n} \leq C(\alpha_n|\beta_n) +2\log c+d$$	
\end{proof}	
Here\ $c=C(\beta_n)-C(\alpha_n)$. Hence, even when $C_{\beta,n}$ is very large, there is no program of length $C_{\beta,n}-2\log c \gg d$ or less which can transform $\beta_n$ into an $\alpha_n$ of complexity even $d$ less.
\begin{rem}
There is a subtle difference between what makes lifting hard and what makes dropping hard. For lifting, it is hard to compute the higher $\alpha$ since we know such an $\alpha_n$ will supply us with shortest programs, including $(\beta_n,P^{*}(\beta_n))$, hence $C_{\beta,n}$. 
\\In the case of dropping, it is what we have, $\beta_n$, supplying the shortest programs. But now we need to know what $\alpha_n$ is, in order to find the correct pair $(\alpha_n,P^{*}(\alpha_n))$, hence $C(\alpha_n)$ hence $C(\beta_n)$ (add $c+d$). So again, finding $\alpha_n$ from $\beta_n$ must be difficult, but for slightly different reasons.
\end{rem}
Lifting is intuitively harder so it is perhaps even stranger that $\beta_n$ cannot be transformed into $\alpha_n$ of complexity at least $d$ less, unless the drop is large. That is, $C(\alpha_n)$ is \emph{much} smaller than $C(\beta_n)$, and the $2 \log c$ term dominates.

\section{Non-triviality -- The 0 sequence}	\label{0}
     It is of course a non-trivial statement that no program of length at most 
     $C_{\beta,n}$ can transform  $\beta_n$ into $\alpha_n$ only if $C_{\beta,n}$ is not itself close to $d$. Let us look at the example $0$.
 	 For most initial segments of the $0$ sequence, it is indeed the case that $C_{\beta,n}$ is close to $d$, since most $n$ are maximally complex or close to it. That is, the fraction of strings of length $n$ with $C_{0,n} \leq c$ grows rapidly with increasing $c$. However, as Shen and Bauwens \cite{Shen} have shown, there are, for each $n$, strings $x$ of length $n$ such that $C_x \geq \log n-O(1)$. Hence, in the $0$ sequence, there are segments of length $n$ which, even with access to any program of length at most $\log\log(n)$ cannot lift(drop) to a segment of complexity $d$ more(less). Since $\log\log(n) \rightarrow \infty$, this is remarkable.  
 	 
 	 In Shen and Bauwens \cite{Shen}, it is remarked (as part of the proof) that for each $k \in N$ there exists strings $x$ of length $n$ of complexity: $C(x) \geq \frac{k}{k+1}n-O(1)$  with $C_x \geq \log n-O(1)$. 
 	 
 	 Can left c.e.\ reals come close to having these two quantities so high? If so, then this would imply an extremely large ``reachability'' gap above and below. 
 	 
 \subsection{Close and unreachable, or none?}
 For the $0_n$ case above, and for general $\beta_{n}$, what does it mean that we cannot reach $\alpha_{n}$ of close complexity? There are two possibilities, there are such $\alpha_{n}$ but they are unreachable, or, there simply aren't any that are both close and at least $d$ away\footnote{Presumably we could taylor make such a left c.e.\ real, we mean when the $l(P_{\alpha})$ are negligible.}. 
 
 None of the results rule out that there are such initial segments close by. For example, the fact that we cannot lift to higher $\alpha_n$ does not mean they are not there. It does however mean that such $\alpha_n$ readily give their own complexity $C_{\alpha,n}$, by dropping to where we are working from, namely $\beta_n$.  
 
 On the other hand, if $\beta$ and $\alpha$ do not compute one another, then, by (the contrapositive of) Stephan's Theorem \ref{rel chait Steph}, the difference $|C(\alpha_n)-C(\beta_n)|$ is unbounded. Hence there is no reason that we should be able to find close $\alpha_n$. 
 \\If neither is complete, then yes, they must each be within $d$ of $C(n)$ infinitely often, by Corollary \ref{wow}, but this would happen at very different $n$.

\section{Complexity withholding strings}
According to a referee, the results in this section are known, we include them for interest and because we use Proposition \ref{cw string} below.

We move away from left c.e.\ reals and just look at normal strings. For this general case with $x$ any string, denote $C(C(x)|x)$ by $C_x$.
In the comment to Exercise 2.8.6 in Li and Vit\'{a}nyi \cite{lv}, on strings $x$ with high $C(C(x)|x)$, they state:
\begin{quote} This means that $x$ only marginally helps to compute $C(x)$; most information in $C(x)$ is extra information related to the halting problem.
\end{quote}
Also, in Exercise 2.8.1
\begin{quote}
The following equality and inequality seem to suggest that the shortest description of $x$ contains some extra information besides the description of $x$\footnote{We should add, only when $C_x$ is non-trivial.}.
\end{quote} 
In this section we will demonstrate this in a strong form. Roughly, that $C(x)$ solves the halting problem up to length $C_x-d$.
\begin{rem}
	Note that up to now, with $x=\alpha_n$  a segment of a left c.e.\ real,  $x$ implicitly contained a \emph{process} and with it, computational information -- it's own settling time -- \emph{itself}. That is, we could run $P_{\alpha}$ and in the time until $\alpha_n$ settles we will get shortest programs - although we will not know up to where if we do not have a lower bound for $C(\alpha_n)$. 
	\\For a general $x$, $x$ is not itself associated with a long process which gives computational information, it needs $C_x$.	
\end{rem}

Recall that, for most $x$, $C(x)$ is trivial in $x$, and $C(x)$ generally gives no information on the halting problem.

If $C_x$ is large, it implies that the time we need to wait for $x$ and a shortest program, $(x,P_x^{*})$ to appear in the enumeration of Section \ref{dovetail}, has high complexity, hence is long.
\begin{rem}
	For $C_x$ to be large, it also means that there are many programs which output $x$ before $P^{*}_x$ does. 
	So in some sense $x$ is disproportionally well represented amongst program outputs. 		 	
	This seems to hold in a stronger sense than just the fact that $C(x)$ being short allows for many ``padded up'' programs for $x$. This is because there must also be many programs shorter than a length $l$ for $l>C(x)$ and $C(l)$ small.
\end{rem}

\begin{prop}\label{cw string}
	There is a constant $d$, such that $(x,C(x))$ finds shortest programs, and hence finds $C(y)$, for all $y$ with $C(y)<C_x-d$.
\end{prop}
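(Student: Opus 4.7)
The plan is to mirror the argument for the Main Lemma, taking the role of ``settling time of $\alpha_n$'' to be the waiting time, in the program dovetailing of Section~\ref{dovetail}, for a shortest program for $x$ to appear. Given $(x,C(x))$, one can recognise such a shortest program $P_x^{*}$: simply wait until a program of length exactly $C(x)$ halts with output $x$. The algorithm is then to dovetail all programs while maintaining, for every output $y$ encountered, the shortest program yet seen that produces $y$; stop as soon as $P_x^{*}$ has been identified in this way, and return the list of provisional pairs $(y,P_y^{*})$ accumulated by that time.

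The correctness statement I want to establish is that for every $y$ with $C(y)<C_x-d$, the provisional shortest program returned for $y$ is in fact the true $P_y^{*}$, and hence its length is the true $C(y)$. I would argue this by contradiction, in direct analogy with the Main Lemma. Suppose that the true $P_y^{*}$ appears in the dovetailing \emph{after} $P_x^{*}$ does. Then from $x$ together with $P_y^{*}$ one can compute $C(x)$: run the dovetailing until the pair $(P_y^{*},y)$ surfaces, scan the halted programs accumulated so far that output $x$, and return the length of the shortest such program. Since $P_y^{*}$ appears after $P_x^{*}$, this shortest length is exactly $C(x)$.

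This gives a program that, on input $x$, outputs $C(x)$, of total length $|P_y^{*}|+O(1)=C(y)+O(1)$, where the $O(1)$ absorbs the dovetailing driver and the minimum-extraction routine. Hence $C_x=C(C(x)\,|\,x)\le C(y)+O(1)$, contradicting $C(y)<C_x-d$ once $d$ is chosen larger than this additive constant. The only delicate step is the bookkeeping that determines $d$: it must cover the fixed-size interpreter for the above reconstruction, independent of $x$ and $y$. Everything else is routine, and once $d$ is fixed the proposition follows, since from $(x,C(x))$ we have recovered $P_y^{*}$, and hence $C(y)$, for every $y$ below the threshold $C_x-d$.
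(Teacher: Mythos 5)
Your proposal is correct and follows essentially the same route as the paper: run the dovetailing until a shortest program for $x$ appears (recognisable from $C(x)$), and argue by contradiction that any $(y,P_y^{*})$ with $C(y)<C_x-d$ surfacing only after that point would let $P_y^{*}$ compute $C(x)$ from $x$, forcing $C_x\le C(y)+O(1)$. You even state the contradiction hypothesis in the right temporal direction (the paper's write-up says ``before'' where the argument needs ``after''), so nothing further is required.
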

\begin{proof}
	The proof is the same as that of our Main lemma, \ref{mainlem}. We are given $x$ and have that no program shorter than $C_x$ can give $C(x)$. 
	We show that there is a $d$ such that by the time $(x,P_x^{*})$ has been enumerated in the dovetailing $D$, all pairs $(y,P_y)$ with $C(y)<C_x-d$ will have been enumerated, solving the function $C(y)$ for all such $y$. 
	\newline
	Say, to the contrary, that such a pair $(y,P_y)$ appears before $(x,P_x^{*})$. Then $P_y$ (with $l(P_y)< l(P_x^{*})$) computes $l(P_x^{*})$ as follows: Given $P_y$, run the dovetailing until $(y,P_y)$ appears. Now look for the shortest $P$ such that $(x,P)$ has appeared. This $P$ must be $P_x^{*}$. Obtain $C(x)$ as $l(P_x^{*})$, a contradiction.
\end{proof}
\begin{cor}
	There is a constant $c$, such that, if $C_x>d+c$ then $(x,C(x))$ solves the halting problem up to length $d$.
\end{cor}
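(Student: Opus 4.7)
The plan is to extract the corollary by observing that the proof of Proposition \ref{cw string} really bounds program lengths, not merely complexities of outputs. Let $d_0$ be the constant supplied by Proposition \ref{cw string}. I would first argue: by the time $(x, P_x^*)$ is enumerated in the dovetailing $D$, every halting program $P$ whatsoever with $l(P) < C_x - d_0$ has already halted. The reasoning repeats the proposition almost verbatim --- if some such $P$ with output $y$ halted after $(x, P_x^*)$, then $P$ alone would compute $C(x)$ from $x$: run $D$ until $P$ halts, then record the length of a shortest $Q$ such that $(x, Q)$ is already listed. This yields $C(C(x)\mid x) < C_x - d_0 + O(1)$, contradicting the definition of $C_x$ once $d_0$ absorbs the dovetailing overhead.

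With that length bound in hand, the algorithm is immediate. Given $(x, C(x))$, run $D$ and watch for the first program of length exactly $C(x)$ that outputs $x$; this is a witness $P_x^*$ and supplies an effective stopping time. At the moment we stop, every halting program of length $\le d$ has already halted, provided $d < C_x - d_0$. So taking $c := d_0$, the hypothesis $C_x > d + c$ yields $d < C_x - d_0$, and we simply output the (finite, explicit) list of programs of length $\le d$ that have appeared in $D$ by the stopping time; these are exactly the halting programs of length $\le d$.

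The main obstacle, such as it is, is not technical but interpretive: recognizing that Proposition \ref{cw string}'s proof gives control over \emph{arbitrary} halting programs of bounded length, not only over shortest programs $P_y^*$ of bounded output-complexity. Once this is noticed, the stopping time furnished by $(x, P_x^*)$ certifies halting itself up to length $d$, rather than merely reconstructing the restricted complexity function on strings of low complexity, and the corollary drops out with $c = d_0$ (or $d_0 + 1$ if one prefers to accommodate any strict-versus-nonstrict off-by-one in the bound $d < C_x - d_0$).
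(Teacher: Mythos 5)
Your argument is correct, and it is the intended one: the paper leaves this corollary unproved, and it is meant to fall out of the proof of Proposition \ref{cw string} in exactly the way you describe — the appearance of $(x,P_x^{*})$ in the dovetailing is a stopping time past which no program of length below $C_x-d_0$ can still be running, since such a late halter would itself compute $C(x)$ from $x$. Your observation that the proposition's proof controls arbitrary halting programs of bounded length (not merely shortest programs of low-complexity outputs) is precisely the point, and taking $c=d_0$ (plus the usual $O(1)$ slack) gives the stated bound.
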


\begin{cor}
	There is a constant $c $, such that, if $C_x>d+c$ then $(x,C(x))$ solves the Busy beaver problem up to length $d$.
\end{cor}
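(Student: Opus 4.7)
The plan is to derive this corollary directly from the preceding halting-problem corollary. The key observation is that the Busy Beaver function is Turing-reducible to the halting problem with only constant-size overhead, so essentially no new complexity-theoretic content is needed beyond Proposition \ref{cw string} and its halting corollary.

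Concretely, I would first invoke the preceding corollary to obtain a constant $c_1$ such that, whenever $C_x > d + c_1$, the pair $(x, C(x))$ decides, for every program of length at most $d$, whether it halts. Second, I would describe a fixed reduction routine that, given such a halting oracle, computes $BB(d)$: enumerate the finitely many programs of length $\leq d$, use the oracle to filter to the halting ones, simulate each surviving program to completion, and return the maximum of their halting times (or output lengths, depending on the variant of $BB$ in use). This routine has some constant size $c_2$, independent of $d$ and $x$. Third, taking $c = c_1 + c_2$ (with a small tolerance for dovetailing overhead absorbed into the constants) then delivers the claimed $c$.

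It is worth remarking that one can equivalently argue directly from Proposition \ref{cw string} without passing through the halting corollary. Running the dovetailing until $(x, P_x^*)$ appears at some time $T$, the hypothesis $C_x > d+c$ (for large enough $c$) forces every halting program $P$ of length $\leq d$ to have already halted by time $T$: otherwise the fixed routine ``run $P$ to completion, then return the length of the shortest $Q$ such that $(x, Q)$ has been enumerated by that time'' would, given $x$, compute $C(x)$ via a program of length at most $d + O(1)$, witnessing $C_x \leq d + O(1)$ and contradicting the hypothesis. Hence $T$ bounds $BB(d)$, and $BB(d)$ itself is then read off by simulating each program of length $\leq d$ for $T$ steps and taking the maximum halting time observed.

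The main (minor) obstacle is purely bookkeeping, namely absorbing the various additive constants introduced at each step into a single final $c$; there is no genuine complexity-theoretic difficulty, since both routes show that the halting information of length $d$ already contained in $(x, C(x))$ is precisely what is needed to recover $BB(d)$.
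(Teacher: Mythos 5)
Your proposal is correct and matches the paper's (implicit) intent: the paper states this corollary without proof, as an immediate consequence of the preceding halting-problem corollary via the standard constant-overhead reduction from the Busy Beaver function to the halting problem, which is exactly your first route. Your alternative direct argument via the settling time $T$ of $(x,P_x^{*})$ in the dovetailing is also sound and is essentially the same mechanism used in the proof of Proposition \ref{cw string} itself.
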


\end{document}